\theoremstyle{plain}
\newtheorem{claim}[theorem]{Claim}
\renewcommand{\epsilon}{\varepsilon}
\DeclareMathOperator{\col}{\mathit{color}}
\newcommand{\old}[1]{{}}
\newcommand{\versionA}[1]{{}}
\newcommand{\eps}{\varepsilon}
\definecolor{darkblue}{rgb}{0,0,0.6}
\title{A PTAS for vertex guarding weakly-visible polygons --- An extended abstract}
\author[1]{Matthew J. Katz}
\affil[1]{Department of Computer Science, Ben-Gurion University of the Negev,
  Beer-Sheva 84105, Israel;\\
  \texttt{matya@cs.bgu.ac.il}.}
\begin{document}
\maketitle


\begin{abstract}
In this extended abstract, we present a PTAS for guarding the vertices of a weakly-visible polygon $P$ from a subset of its vertices, or in other words, a PTAS for computing a minimum dominating set of the visibility graph of the vertices of $P$. We then show how to obtain a PTAS for vertex guarding $P$'s boundary.
\end{abstract}

\section{Introduction}
\label{sec:intro}

Let $P$ be a polygon and let $e=(u,v)$ be one of its edges. We say that $P$ is \emph{weakly visible} from $e$, if every point in $P$ is visible from a point on $e$. Notice that if the angles at $u$ and at $v$ are both convex, then $P \setminus e$ is contained in one of the open half planes defined by the line containing $e$. In this paper, we fist consider an $n$-gon $P$, which is weakly visible from one of its edges $e=(u,v)$ where the angles at both $u$ and $v$ are convex. Without loss of generality, we assume that $e$ is contained in the $x$-axis and that $P \setminus e$ is contained in the open half plane above the $x$-axis. We study the problem of guarding the vertices of $P$ from a subset of its vertices. That is, we seek a minimum-cardinality subset $Q$ of the vertices of $P$, such that for each vertex $w$ of $P$ there exists a vertex in $Q$ that sees $w$. We present a PTAS for this problem, i.e., we present a polynomial-time algorithm that computes such a guarding set of size $O(1+\eps)\cdot {\mbox{\it OPT}}$, for any $\eps > 0$, where {\it OPT} is the size of a minimum-cardinality such guarding set. We then show how to remove the assumption that the angles at both $u$ and $v$ are convex. Finally, we show how to obtain a PTAS for vertex guarding $P$'s boundary.

Our PTAS is a standard local search algorithm. Its proof is based on the observation that the, so called, \emph{order claim}, which was originally stated for 1.5D terrains (see~\cite{Ben-MosheKM07}), also holds for weakly visible polygons. We then adapt the proof of Krohn et al.~\cite{KrohnGKV14}, who presented a PTAS for vertex guarding the vertices of a 1.5D terrain, to our setting. The proof of Krohn et al.~\cite{KrohnGKV14}, in turn, is based on the proof scheme of Mustafa and Ray~\cite{MustafaR09}, which is used to show that a local search algorithm is a PTAS (see also~\cite{ChanH12}).

{\bf Related results.}
The most relevant results are due to Bhattacharya et al.~\cite{BhattacharyaGR17}, who presented a 4-approximation algorithm for vertex guarding the vertices of a weakly-visible polygon and a 6-approximation algorithm for vertex guarding such a polygon.
Recently, by applying these results, Bhattacharya et al.~\cite{BhattacharyaGP17} managed to obtain the first constant approximation algorithm for vertex guarding a simple polygon. By an inapproximability result of Eidenbenz et al.~\cite{EidenbenzSW01}, this latter problem does not admit a PTAS, even if the goal is only to guard the polygon's boundary. Bhattacharya et al.~\cite{BhattacharyaGR17} also showed that vertex guarding a weakly-visible polygon with holes does not admit a polynomial-time approximation algorithm with approximation ratio better than $((1-\eps)/12) \ln n$, for any $\eps > 0$. As mentioned above, our main result builds on the result of Krohn et al.~\cite{KrohnGKV14}, who presented a PTAS for vertex guarding the vertices of a 1.5D terrain.

\section{Algorithm}

Let $V$ denote the set of vertices of $P$.
Given $\eps > 0$, set $k = \frac{\alpha}{\eps^2}$, for an appropriate constant $\alpha > 0$.
\begin{enumerate}
\item
$Q \leftarrow V$.
\item
Determine whether there exist subsets $S \subseteq Q$ of size at most $k$ and $S' \subseteq (V \setminus Q)$ of size
at most $|S| - 1$, such that $(Q \setminus S) \cup S'$ guards $V$.
\item
If such $S$ and $S'$ exist, set $Q \leftarrow (Q \setminus S) \cup S'$, and go back to Step~2. Otherwise, return $Q$.
\end{enumerate}

As usual, the running time of the algorithm is $O(n^{O(1/\eps^2)})$.

\section{Analysis}

For two points $a$ and $b$ on $P$'s boundary, $a \ne b$, we say that $a$ precedes $b$ (or $b$ succeeds $a$) and write $a \prec b$ (or $b \succ a$), if when traversing $P$'s boundary clockwise from $u$, one reaches $a$ before $b$.  

We first observe that the following claim, which was formulated for 1.5D terrains and is known as the \emph{order claim}, also holds for weakly visible polygons.

\begin{figure}[h]
	\centering
	\includegraphics[width=0.6\textwidth]{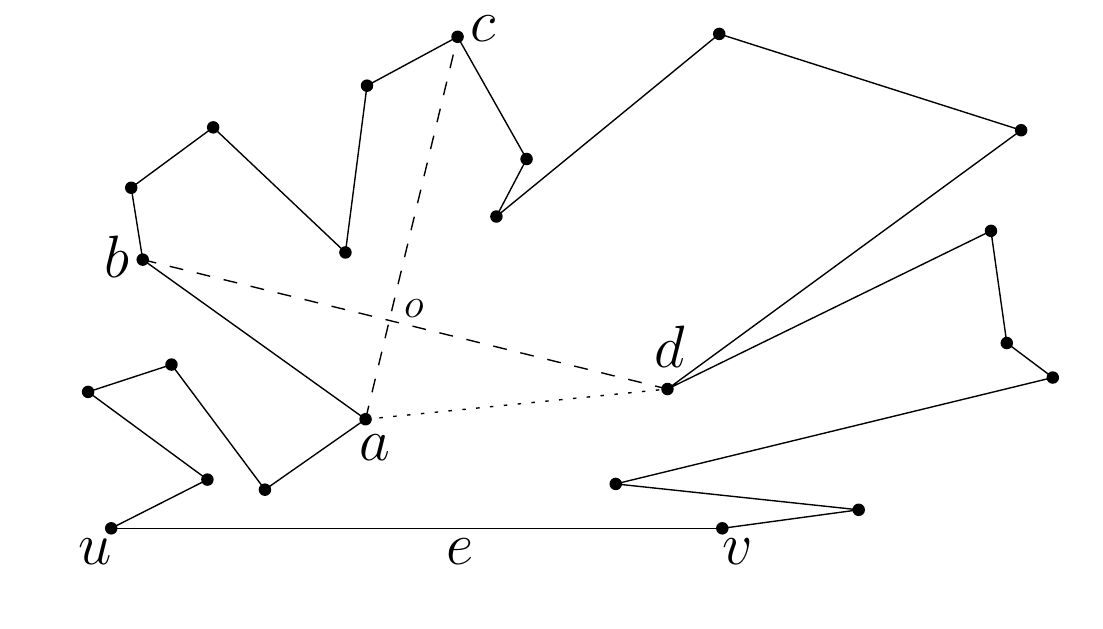}
	\caption{\label{fig:order_claim} A polygon weakly visible from $e=(u,v)$. The order claim: $a \prec b \prec c \prec d$, $a$ sees $c$, $b$ sees $d$ $\Longrightarrow$ $a$ sees $d$.}
\end{figure}

\begin{claim}[(Clockwise) order claim] Let $a, b, c, d$ be four vertices (or points on $P$'s boundary) such that $a \prec b \prec c \prec d$, and assume that $a$ sees $c$ and $b$ sees $d$. Then $a$ must also see $d$. 
\end{claim}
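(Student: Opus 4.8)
The plan is to mirror the classical proof of the order claim for $1.5$D terrains, with the weak-visibility hypothesis taking over the role played there by $x$-monotonicity. Everything is organized around one auxiliary point, the intersection of the chords $\overline{ac}$ and $\overline{bd}$, so the first task is to prove that these two chords actually cross.

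For the crossing, observe that since $a$ sees $c$, the segment $\overline{ac}$ is a chord of $P$ whose relative interior lies in $\mathrm{int}(P)$; it therefore cuts $P$ into two simple subpolygons $R_{1},R_{2}$ with $R_{1}\cap R_{2}=\overline{ac}$. From the order $a\prec b\prec c\prec d$ the vertex $b$ lies on the open boundary arc running from $a$ to $c$ through $b$, hence $b\in R_{1}$, while $d$ lies on the complementary arc, hence $d\in R_{2}$. Since $b$ sees $d$, the segment $\overline{bd}\subseteq P=R_{1}\cup R_{2}$ joins a point of $R_{1}$ to a point of $R_{2}$ and so must meet $R_{1}\cap R_{2}=\overline{ac}$; as $b,d\notin\overline{ac}$ (a chord touches $\partial P$ only at its two endpoints), the two segments cross properly at a single interior point $X$.

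A proper crossing of $\overline{ac}$ and $\overline{bd}$ forces $a,b,c,d$ into convex position in this cyclic order, with $\overline{ac},\overline{bd}$ as the diagonals and $X$ strictly inside the quadrilateral. Consequently $\overline{ad}$ is a side of the quadrilateral, the triangle $aXd$ is precisely the corner it cuts off (so $b,c$ lie outside it), and $X$ lies on the same side of the line through $a$ and $d$ as $b$ and $c$. The polyline $a\to X\to d$ lies in $P$, since $\overline{aX}\subseteq\overline{ac}$ and $\overline{Xd}\subseteq\overline{bd}$. It therefore suffices to prove $\overline{ad}\subseteq P$, equivalently that $\partial P$ does not intrude into the triangle $aXd$: if it does, then because the sides $\overline{aX},\overline{Xd}$ already lie in $P$, any intruding boundary sub-chain $\gamma$ must enter and exit through $\overline{ad}$, cutting off from $P$ a pocket $N$ inside the triangle $aXd$ that is bounded by $\gamma$ and a sub-segment of $\overline{ad}$.

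Ruling out this pocket is the step I expect to be the main obstacle, and it is exactly where weak visibility is indispensable. In the terrain proof the corresponding fact is immediate: $x$-monotonicity makes ``inside the free region'' mean ``above the terrain,'' and a two-line argument shows that $\overline{ad}$ stays above first $\overline{ac}$ and then $\overline{bd}$, hence above the terrain. No such vertical order is available for a weakly-visible polygon, so I would instead argue from the defining property that every point of $P$ sees the edge $e$. The plan is to take a point of $P$ immediately across $\gamma$ from the pocket and follow a segment witnessing its visibility to some $s\in e$; tracing how this sightline interacts with $\gamma$, with the chords $\overline{ac},\overline{bd}$, and with the quadrilateral's fixed orientation, should contradict either the containment of one of these chords in $P$ or the weak visibility of the chosen point. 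Pinning down the correct witness point and the precise geometric contradiction --- rather than the bookkeeping around it --- is the delicate heart of the argument.
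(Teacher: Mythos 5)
Your setup is sound and, up to the point where you stop, it matches the paper's proof step for step: you establish (more carefully than the paper, which takes it for granted) that $\overline{ac}$ and $\overline{bd}$ cross at a point $X$ (the paper's $o$), and you correctly reduce the claim to showing that $\partial P$ cannot intrude into the triangle $aXd$, noting that any intruding chain $\gamma$ must enter and leave through the side $\overline{ad}$ since the other two sides lie on interior chords. But the step you explicitly defer --- ``pinning down the correct witness point and the precise geometric contradiction'' --- is the entire content of the claim, and the plan you sketch for it points in an unpromising direction. A point of $P$ just across $\gamma$ from the pocket can in general see $e$ perfectly well: a sightline is free to cross the chords $\overline{ac}$ and $\overline{bd}$ (they are interior segments, not obstacles), so it can leave the triangle through the sides $\overline{aX}$ or $\overline{Xd}$, and tracing such a sightline against the chords yields neither a violated chord containment nor a violated visibility. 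Weak visibility is not contradicted at a generic point near $\gamma$; it is contradicted at $a$ or at $d$ themselves.

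The idea you are missing is to classify the intruding chain by its position in the \emph{boundary order} rather than to chase a sightline from a nearby witness. The chain $\gamma$ is a connected piece of $\partial P$ that crosses $\overline{ad}$ twice while avoiding $\overline{ac}$ and $\overline{bd}$; the paper observes that such an intrusion can only occur on the portion of the boundary traversed clockwise from $u$ before reaching $a$, or on the portion traversed counterclockwise from $v$ before reaching $d$ (the arcs between $a$ and $d$ are separated from the triangle by the two chords). In the first case the chain, which dips through $\overline{ad}$ and must then return to the endpoint $a$, pockets $a$ away from the edge $e$, so $a$ cannot be seen from any point of $e$; in the second case, symmetrically, $d$ cannot be seen from $e$. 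Either conclusion contradicts weak visibility from $e$ --- and this is exactly where the hypotheses $a \prec b \prec c \prec d$ and the position of $e$ enter. Without this order-of-traversal dichotomy your pocket-ruling-out step does not close, so as written the proposal has a genuine gap at its decisive step.
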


\begin{proof}
If $a$ does not see $d$, then either $a$ or $d$ is not visible from a point on $e$, see Figure~\ref{fig:order_claim}. Let $o$ denote the intersection point of $\overline{ac}$ and $\overline{bd}$. If the ray from $a$ in the direction of $d$ hits $P$'s boundary before reaching $d$, then $P$'s boundary enters and leaves the triangle $\Delta aod$ through the edge $ad$ without intersecting the edges $ao$ and $od$. If this happens before the boundary `reaches' $a$ (advancing clockwise from $u$), then $a$ cannot be seen from $e$, and if this happens before the boundary `reaches' $d$ (advancing counterclockwise from $v$), then $d$ cannot be seen from $e$. 
\end{proof}

Let $R$ (the red set) be a minimum-cardinality guarding set and let $B$ (the blue set) be the guarding set obtained by the algorithm above. We need to prove that $|B| \le (1+\eps)\cdot |R|$. We may assume that $R \cap B = \emptyset$; otherwise, we prove that $|B'| \le (1+\eps)\cdot |R'|$, where $R' = R \setminus B$ and $B' = B \setminus R$.  
We construct a bipartite graph $G = (R \cup B, E)$, and prove that (i) $G$ is planar and (ii) $G$ satisfies the \emph{locality condition}, that is, for any vertex $w$, there exist vertices $r \in R$ and $b \in B$, such that $r$ sees $w$, $b$ sees $w$, and $(r,b) \in E$. By the proof scheme of Mustafa and Ray~\cite{MustafaR09}, this implies that $|B| \le (1+\eps)\cdot |R|$.

\begin{figure}[h]
	\centering
	\includegraphics[width=0.7\textwidth]{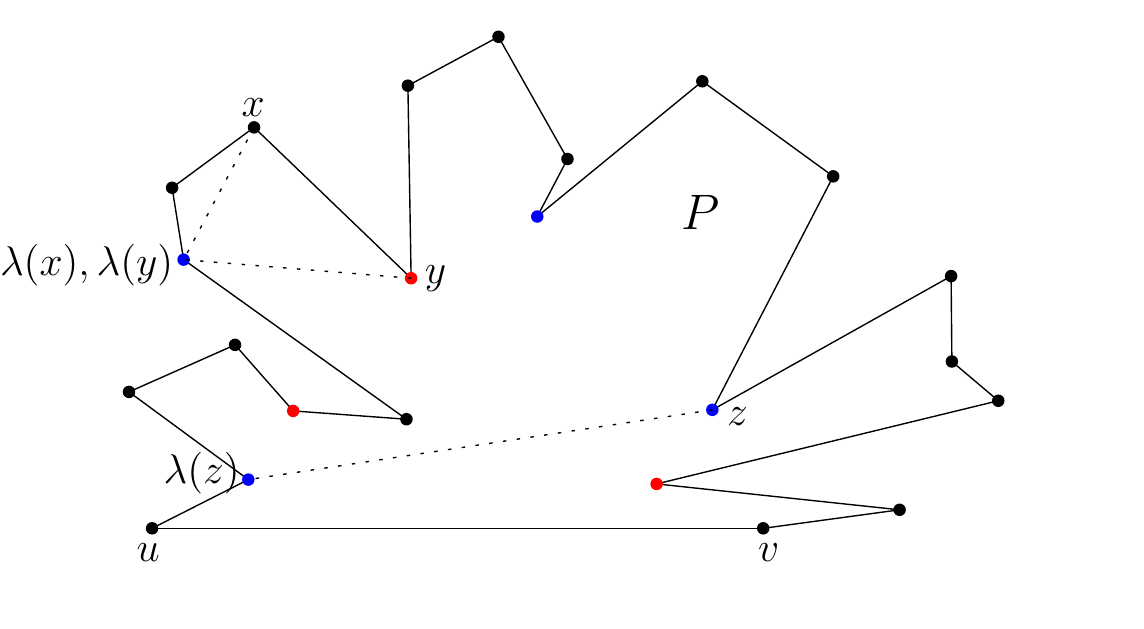}
	\caption{\label{fig:A1} The sets $R$ and $B$ in red and blue, and vertices $x, y, z$ of $P$ and their corresponding $\lambda$-vertices in $R \cup B$.}
\end{figure}

For a vertex $w$ of $P$, if there exists a vertex in $R \cup B$ that sees $w$ and precedes it, then let $\lambda(w)$ be the first such vertex (i.e., when traversing the boundary clockwise from $u$); see Figure~\ref{fig:A1}. Similarly, if there exists a vertex in $R \cup B$ that sees $w$ and succeeds it, then let $\rho(w)$ be the last such vertex. Notice that since $R \cap B = \emptyset$ at least one of the two exists. 

\noindent
{\bf Constructing $G$.}
Let $A_1 = \{\overline{\lambda(w)w} \ | \ w \mbox{ a vertex of } P \mbox{ for which } \lambda(w) \mbox{ is defined}\}$.

\begin{claim}
\label{cl:A1_non-crossing}
The segments in $A_1$ are non-crossing.
\end{claim}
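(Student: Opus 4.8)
The plan is to argue by contradiction: suppose two segments $\overline{\lambda(w)\,w}$ and $\overline{\lambda(w')\,w'}$ of $A_1$ cross in their interiors, and derive a violation of the defining ``first-ness'' property of $\lambda$. Throughout I write $a=\lambda(w)$ and $a'=\lambda(w')$, so that $a\prec w$, $a'\prec w'$, both $a,a'\in R\cup B$, $a$ sees $w$, and $a'$ sees $w'$.

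First I would reduce a crossing to an interleaving of the four endpoints along $\partial P$. Since $\lambda(w)$ sees $w$, the segment $\overline{aw}$ lies inside $P$ and is a chord of the simple polygon with both endpoints on $\partial P$, and likewise for $\overline{a'w'}$. The key geometric fact I would establish is that two such chords cross in their interiors if and only if their endpoints interleave in the boundary order $\prec$. The ``if'' direction is immediate; for the direction I actually need (cross $\Rightarrow$ interleave) I would use that the chord $\overline{aw}$ splits $P$ into two subpolygons whose only shared boundary interior to $P$ is $\overline{aw}$ itself, together with the fact that a straight segment meets the supporting line of $\overline{aw}$ in at most one point: if $a'$ and $w'$ belonged to the same subpolygon of this partition, then crossing $\overline{aw}$ would require $\overline{a'w'}$ to leave and re-enter that subpolygon, forcing at least two intersections with $\overline{aw}$ --- impossible. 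Hence a crossing forces $a'$ and $w'$ into different subpolygons, i.e.\ the endpoints interleave. This reduction is the step I expect to require the most care, since $P$ is non-convex and the usual convex-position argument does not apply verbatim.

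Granting the interleaving, I first note that if the two segments share an endpoint they meet only there and do not cross, so I may assume their four endpoints are distinct; after possibly swapping the two segments I may also assume $a\prec a'$. Since $a$ is then the smallest of the four endpoints and $a'\prec w'$, the only interleaving pattern compatible with a crossing is $a\prec a'\prec w\prec w'$.

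Finally I would invoke the order claim on the four points $a\prec a'\prec w\prec w'$: we have $a$ sees $w$ and $a'$ sees $w'$, so the order claim yields that $a$ sees $w'$. But then $a\in R\cup B$ sees $w'$, precedes $w'$, and satisfies $a\prec a'=\lambda(w')$, contradicting the choice of $\lambda(w')$ as the \emph{first} vertex of $R\cup B$ that sees $w'$ and precedes it. This contradiction shows that no two segments of $A_1$ cross, proving the claim. The application of the order claim is the clean conceptual core; the bulk of the work is the chord-crossing/interleaving reduction described above.
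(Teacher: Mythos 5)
Your proof is correct and takes essentially the same route as the paper: both arguments reduce a potential crossing to the single interleaved boundary order $a \prec a' \prec w \prec w'$ and then kill it by applying the order claim ($a$ sees $w$, $a'$ sees $w'$ $\Rightarrow$ $a$ sees $w'$), contradicting the first-ness of $\lambda(w')$. The only difference is presentational: the paper runs the case analysis directly and dismisses the nested and disjoint orders as clearly non-crossing, whereas you argue by contradiction and spell out the converse reduction (crossing $\Rightarrow$ interleaving endpoints) via the subpolygon-splitting argument that the paper treats as obvious.
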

\begin{proof}
Let $\overline{\lambda(x)x}$ and $\overline{\lambda(y)y}$ be two segments in $A_1$, such that $\lambda(x) \ne \lambda(y)$. Assume, w.l.o.g., that $\lambda(x) \prec \lambda(y)$. We first notice that it is impossible that $\lambda(x) \prec \lambda(y) \prec x \prec y$, since by the order claim, this would imply that $\lambda(x)$ sees $y$, which is impossible by the definition of $\lambda(y)$. Therefore, either (i) $\lambda(x) \prec \lambda(y) \prec y \prec x$, or (ii) $\lambda(x) \prec x \prec \lambda(y) \prec y$. But, clearly, in both these cases the two segments cannot cross each other, even if the polygon is not weakly visible.   
\end{proof}

For each vertex $x \in R \cup B$, do the following. If $\lambda(x)$ is defined and $\col(\lambda(x)) \ne \col(x)$, add the edge $(\lambda(x),x)$ to $E_1$. If there exists a segment $\overline{ab} \in A_1$, such that $a \prec x \prec b$, then let $\overline{\lambda(w)w} \in A_1$ be the sole such segment that can be reached from $x$ without existing $P$ and without intersecting any other segment in $A_1$ (except possibly at $x$). Now, if $\col(\lambda(w)) \ne \col(x)$, add the edge $(\lambda(w),x)$ to $E_1$.

\begin{figure}[h]
	\centering
	\includegraphics[width=0.9\textwidth]{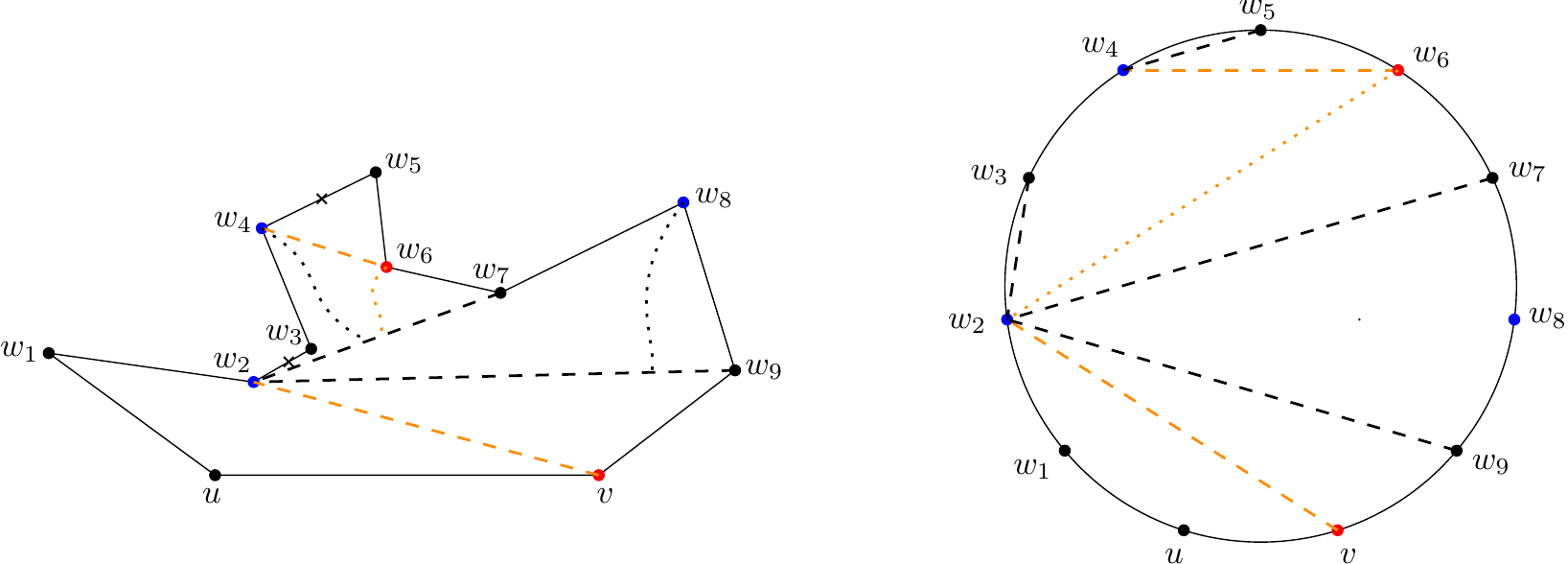}
	\caption{\label{fig:G1} Left: The dashed segments, including the two marked edges of $P$, are the segments in $A_1$. The segments in $A_1$ that are added to $E_1$ are drawn in orange. The dotted curves connect $w_4$, $w_6$, and $w_8$ to their corresponding segments in $A_1$, but only the one connecting $w_6$ induces an edge in $E_1$ and is therefore drawn in orange. Right: The embedding of $\overline{G}_1 = (V, A_1 \cup E_1)$.}
\end{figure}

\subsection{$G$ is planar}

We now prove that the bipartite graph $G_1 = (R \cup B, E_1)$ is planar, by describing an embedding of $G_1$ or, more precisely, of the graph $\overline{G}_1 = (V, A_1 \cup E_1)$. Let $C$ be the unit circle centered at the origin. We map the vertices in $V$ to equally-spaced points on $C$ and the edges in $A_1 \cup E_1$ to line segments between pairs of points, see Figure~\ref{fig:G1}.
We claim that the resulting set of line segments is non-crossing, i.e., we have obtained an embedding of $\overline{G}_1$ and therefore also of $G_1$. This follows from Claim~\ref{cl:A1_non-crossing} and by observing that the edges in $E_1 \setminus A_1$ can be partitioned into a collection of `fans', where each fan is associated with a segment $\overline{\lambda(w)w}$ of $A_1$ and lies to its left (when traversing the segment from $\lambda(w)$ to $w$).  

We now define the sets $A_2$ and $E_2$ by replacing $\lambda$ with $\rho$, that is,
$A_2 = \{\overline{w\rho(w)} \ | \ w \mbox{ a vertex of }\\ P \mbox{ for which } \rho(w) \mbox{ is defined}\}$ and $E_2$ is defined w.r.t. $A_2$. We then observe that the bipartite graph $G_2 = (R \cup B, E_2)$ is planar, by describing an embedding of $\overline{G}_2 = (V, A_2 \cup E_2)$. Moreover, we claim that the graph $G_1 \cup G_2$ is planar, since we can embed the graph $\overline{G}_1 \cup \overline{G}_2$ by drawing the edges of $\overline{G}_1$ inside $C$ and the edges of $\overline{G}_2$ outside $C$.

Finally, we define the set $E_3$ as follows. For each vertex $x \not\in R \cup B$, if both $\lambda(x)$ and $\rho(x)$ are defined and $\col(\lambda(x)) \ne \col(\rho(x))$, then add the edge $(\lambda(x), \rho(x))$ to $E_3$. The final graph $G = (R \cup B, E)$ where $E=E_1 \cup E_2 \cup E_3$ is planar, since $\overline{G}_1 \cup \overline{G}_2$ is planar and each edge $(\lambda(x), \rho(x)) \in E_3$ can be drawn as the union of the segments $\overline{\lambda(x)x} \in A_1$ and $\overline{x\rho(x)} \in A_2$.

\subsection{$G$ satisfies the locality condition}

We prove that the locality condition holds.
\begin{lemma}
For any vertex $w \in V$, there exist vertices $r \in R$ and $b \in B$, such that $r$ sees $w$, $b$ sees $w$, and $(r,b) \in E$.
\end{lemma}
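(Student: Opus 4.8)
The plan is to fix a vertex $w$ and exhibit a red--blue edge incident to two guards that both see $w$. Since $R$ and $B$ are each guarding sets, there is a red vertex seeing $w$ and a blue vertex seeing $w$; in particular $\lambda(w)$ and $\rho(w)$ are not both undefined, and I want to argue that some pair $(\lambda(\cdot),\rho(\cdot))$ or $(\lambda(\cdot),\cdot)$ forced into one of $E_1,E_2,E_3$ does the job. The natural case split is on whether $w\in R\cup B$ or not.

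First I would handle the case $w\notin R\cup B$. Here I would look at $\lambda(w)$ and $\rho(w)$. Because $R$ and $B$ each contain a guard of $w$, and because $\lambda(w)$ is the \emph{first} preceding guard and $\rho(w)$ the \emph{last} succeeding guard, I claim that between them the two colors are represented: if $\col(\lambda(w))=\col(\rho(w))$, say both red, then I need to produce a blue guard of $w$ and show it forces an edge. The key sub-step is to show that whenever $w$ has guards of both colors, at least one of the forced edges $(\lambda(w),\rho(w))\in E_3$ (when the colors differ) or one of the fan-edges in $E_1,E_2$ connects a red and a blue guard of $w$. I would use the order claim here exactly as in the non-crossing proof: any guard $g$ of $w$ with $g\prec w$ satisfies $\lambda(w)\preceq g$, and by the order claim $\lambda(w)$ then also sees $w$'s relevant neighbors, which lets me conclude $\lambda(w)$ sees $w$; symmetrically for $\rho(w)$. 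Thus both $\lambda(w)$ and $\rho(w)$ see $w$, and if they have different colors the edge in $E_3$ satisfies the locality condition directly.

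The harder case, which I expect to be the main obstacle, is when $\lambda(w)$ and $\rho(w)$ have the \emph{same} color, or when $w\in R\cup B$ so that one of $\lambda(w),\rho(w)$ may coincide with $w$ or be undefined. Suppose $\lambda(w)$ and $\rho(w)$ are both red while some blue $b$ sees $w$. I would locate $b$ relative to the segment $\overline{\lambda(w)w}\in A_1$ (or $\overline{w\rho(w)}\in A_2$): since $b$ sees $w$ and $b\prec w$ or $b\succ w$, the order claim constrains $b$ to lie on a specific side, so that $b$ falls into the fan of one of these segments and the edge-insertion rule for $E_1$ (or $E_2$), which fires precisely when the colors differ, produces an edge between $b$ and the red endpoint $\lambda(w)$ (resp. $\rho(w)$). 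Both endpoints of that edge see $w$, giving the locality condition. The delicate point is verifying that $b$ is indeed the vertex $x$ reaching $\overline{\lambda(w)w}$ "without exiting $P$ and without crossing another segment of $A_1$," i.e.\ that the unique segment reachable from $b$ is the one seeing $w$; this is where I would lean most heavily on the order claim and on the non-crossing structure of $A_1$ established in Claim~\ref{cl:A1_non-crossing}.

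Finally I would dispatch the boundary case $w\in R\cup B$ by noting that $w$ trivially sees itself, so $w$ plays the role of one guard, and I only need the opposite-colored guard together with an edge; this reduces to finding the nearest opposite-colored guard of $w$ and invoking the same fan/edge-insertion argument as above. Throughout, the recurring engine is: (1) the order claim forces the extremal guards $\lambda(w),\rho(w)$ to actually see $w$, and (2) whenever a differently colored guard of $w$ exists, the $E_1/E_2/E_3$ insertion rules guarantee an edge between two same-$w$-seeing guards of opposite color. I do not expect heavy computation, only careful case analysis guided by the cyclic order $\prec$.
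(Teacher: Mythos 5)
Your skeleton matches the paper's (case split on $w \in R\cup B$; the $E_3$ edge $(\lambda(w),\rho(w))$ when the two extremal guards have different colors; a fan edge in $E_1$ or $E_2$ otherwise), but there is a genuine gap exactly at the point you flag as delicate, and your intended resolution of it is wrong. You assert that an opposite-colored guard $b$ of $w$ ``falls into the fan of $\overline{\lambda(w)w}$'' so that the insertion rule produces an edge between $b$ and $\lambda(w)$. That is false in general: the rule connects $b$ to the endpoint $\lambda(y)$ of the \emph{unique segment of $A_1$ reachable from $b$}, and because segments of $A_1$ nest, that segment may be some $\overline{\lambda(y)y}$ strictly inside $\overline{\lambda(w)w}$, with $\lambda(y)\ne\lambda(w)$. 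For such a segment you have neither of the two facts you need: that $\lambda(y)$ sees $w$, and that $\col(b)\ne\col(\lambda(y))$ (if $\lambda(y)$ happens to have $b$'s color, the rule does not fire at all, and your argument produces no edge). Leaning on Claim~\ref{cl:A1_non-crossing} alone cannot close this, since non-crossing is what \emph{creates} the nesting problem.

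The paper closes the gap with an extremal choice you do not make: take $z$ to be the \emph{first} vertex (clockwise from $u$) in $R\cup B$ with $\lambda(w)\prec z\prec w$ that sees $w$ and has $\col(z)\ne\col(\lambda(w))$, and let $\overline{\lambda(y)y}$ be the segment of $A_1$ associated with $z$, so that $\lambda(w)\preceq\lambda(y)\prec z\prec y\preceq w$. Then two steps finish: (i) the order claim applied to the quadruple $\lambda(y)\prec z\prec y\prec w$ --- using that $\lambda(y)$ sees $y$ and $z$ sees $w$ --- shows $\lambda(y)$ sees $w$ (this is a different application of the order claim than the one you sketch; note your use of the order claim to show $\lambda(w)$ sees $w$ is vacuous, as that holds by definition of $\lambda$); and (ii) the minimality of $z$ forces $\col(z)\ne\col(\lambda(y))$, since otherwise $\lambda(y)$ would itself be an earlier qualifying guard, contradicting the choice of $z$. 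Hence $(\lambda(y),z)\in E_1$ with both endpoints seeing $w$. Note the direction of the extremal choice matters: your phrase ``nearest opposite-colored guard of $w$'' suggests the guard closest to $w$, but the argument needs the one closest to $\lambda(w)$, i.e., first from $u$; with the ``last such'' choice, step (ii) fails. Without this minimality device your proof, as written, does not go through.
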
 
\begin{proof}
Let $x$ be a vertex of $P$. We distinguish between two cases:\\
{\bf $x \not\in R \cup B$:} If both $\lambda(x)$ and $\rho(x)$ are defined and $\col(\lambda(x)) \ne \col(\rho(x))$, then $(\lambda(x), \rho(x)) \in E_3$ and the condition holds. If both $\lambda(x)$ and $\rho(x)$ are defined but $\col(\lambda(x)) = \col(\rho(x))$, then there exists a vertex $w \in R \cup B$, such that $w$ sees $x$ and $\col(w) \neq \col(\lambda(x)), \col(\rho(x))$. Assume, w.l.o.g., that $\lambda(x) \prec w \prec x$ and let $z$ be the first such vertex (when traversing $P$'s boundary clockwise from $u$). Let $(\lambda(y),y)$ be the segment in $A_1$ associated with $z$. Then $\lambda(x) \preceq \lambda(y) \prec z \prec y \preceq x$. Notice the $\lambda(y)$ sees $x$, since if $y \ne x$, then by the order claim (applied to $\lambda(y), z, y, x$) $\lambda(y)$ sees $x$, and if $y = x$, then $\lambda(y) = \lambda(x)$ so $\lambda(y)$ sees $x$. Now, since $z$ is the ``first such vertex'',  $\col(z) \ne \col(\lambda(y))$, so the edge $(\lambda(y),z) \in E_1$ and the condition holds. If only $\lambda(x)$ is defined, then we proceed as above. 

\noindent
{\bf $x \in R \cup B$:} If $\lambda(x)$ is defined and $\col(x) \ne \col(\lambda(x))$, then $(\lambda(x),x) \in E_1$ and the condition holds. Similarly, if $\rho(x)$ is defined and $\col(x) \ne \col(\rho(x))$, then $(x,\rho(x)) \in E_2$ and the condition holds. Otherwise, we conclude w.l.o.g. that there exists a vertex $w \in R \cup B$, such that $w$ sees $x$ and $\lambda(x) \prec w \prec x$ and $\col(w) \neq \col(\lambda(x))$. Let $z$ be the first such vertex (when traversing $P$'s boundary clockwise from $u$), and proceed exactly as in the previous case.
 
\end{proof}

\section{Extensions}

\begin{figure}[h]
	\centering
	\includegraphics[width=0.6\textwidth]{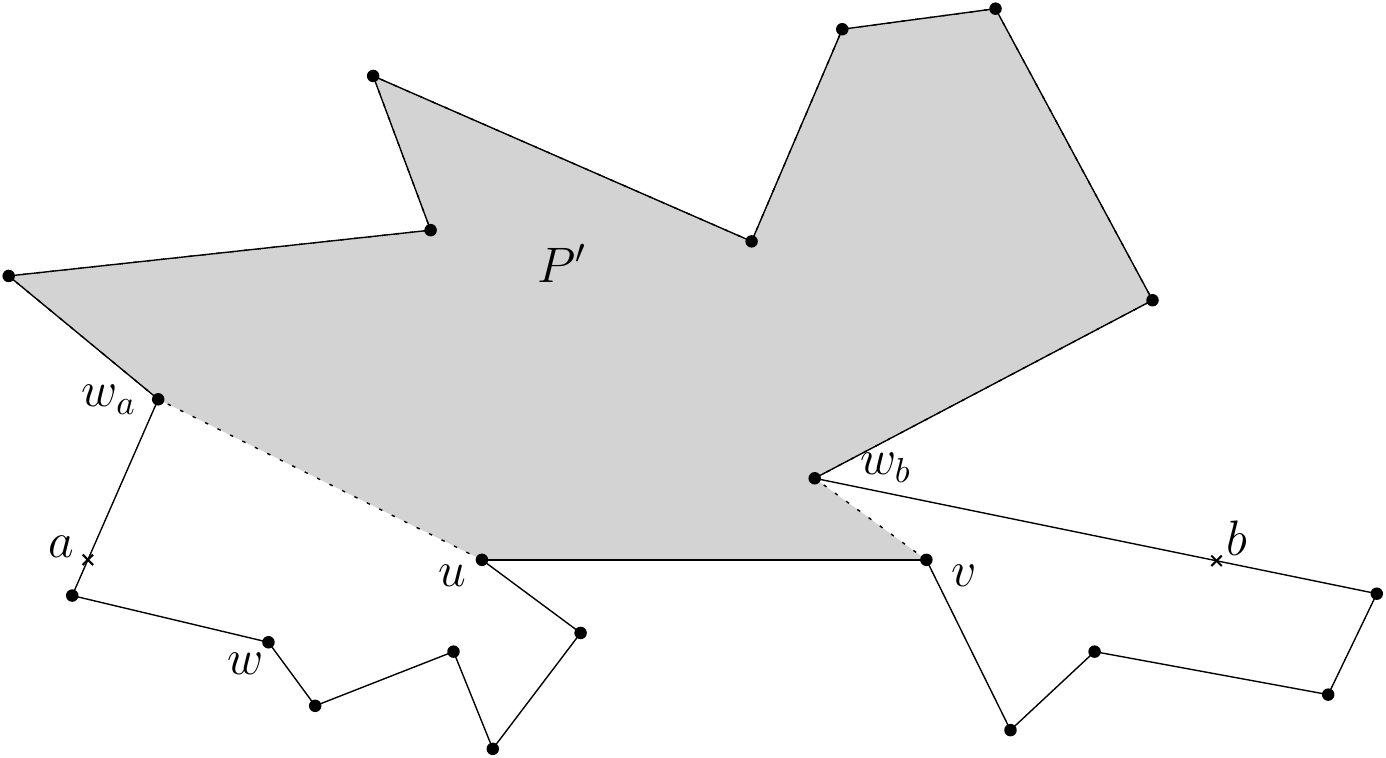}
	\caption{\label{fig:wv_poly} Removing the convexity assumption.}
\end{figure}

{\bf Removing the convexity assumption.}
We show how to remove the assumption that the angles at $u$ and at $v$ are convex. Assume, e.g., that the angle at $u$ is concave, and let $a$ be the first point on $P$'s boundary (moving clockwise from $u$) that lies on the $x$-axis; see Figure~\ref{fig:wv_poly}. Then, every point in the open portion of the boundary between $u$ and $a$ is visible from $u$ and is not visible from any other point on the edge $e=(u,v)$. Moreover, by the order claim, for any vertex $w$ in this portion of $P$'s boundary, if $w$ sees some point on $P$'s boundary, then so does $u$. Therefore, we may assume that an optimal guarding set does not include a vertex from this portion. Now, let $w_a$ be the first vertex following $a$. We place a guard at $u$ and replace the portion of $P$'s boundary between $u$ and $w_a$ by the edge $(u,w_a)$. Similarly, if the angle at $v$ is concave, we define the point $b$ and the vertex $w_b$ (by moving counterclockwise from $v$), place a guard at $v$, and replace the portion of $P$'s boundary between $v$ and $w_b$ by the edge $(v,w_b)$. Finally, we apply our local search algorithm to the resulting polygon $P'$, after adjusting $k$ so that together with $u$ and $v$ we still get a $(1 + \eps)$-approximation of an optimal guarding set for $P$.

{\bf Guarding the polygon's boundary from its vertices.}
In this paragraph we continue to assume that the angles at $u$ and $v$ are convex. 
We have described a PTAS for vertex guarding the vertices of $P$, however, with minor modifications, one can obtain a PTAS for vertex guarding a polynomial-size set $W$ of points on $P$'s boundary. To obtain a PTAS for vertex guarding the polygon's boundary, we generate a polynomial-size set of \emph{witness} points $W$ on $P$'s boundary, such that any subset of vertices that guards $W$, guards the entire boundary. This is done using ideas similar to those used in Friedrichs et al.~\cite{FriedrichsHK016}, who did it for 1.5D terrains.

{\bf Concluding remarks.}
It would be interesting to find other families of polygons for which a PTAS exists for vertex guarding the polygon's set of vertices (or its boundary or its boundary plus interior). In particular, does there exist a PTAS for vertex guarding the vertices of a simple polygon? Finally, it would be interesting to examine whether our results can be used to improve the constants of approximation obtained by Bhattacharya et al.~\cite{BhattacharyaGR17} for vertex guarding a weakly-visible polygon and by Bhattacharya et al.~\cite{BhattacharyaGP17} for the three versions of vertex guarding a simple polygon.

{\bf Acknowledgment.}
\old{
We wish to thank Nandhana Duraisamy, Ramesh Kumar, Anil Maheshwari, and Subhas Nandy for pointing out an error in the previous version of this manuscript, specifically, for misinterpreting an observation of Bhattacharya et al.~\cite{BhattacharyaGR17}. We had concluded that vertex guarding the boundary of a weakly-visible polygon implies vertex guarding the entire polygon. It is not difficult to find examples where this statement is evidently wrong.
}
We wish to thank Nandhana Duraisamy, Ramesh Kumar, Anil Maheshwari, and Subhas Nandy for pointing out an error in Section~4 (Extensions) of the previous version of this manuscript.

\bibliographystyle{abbrv}
\bibliography{p}

\old{

}

\end{document}